\documentclass[pra,12pt,tightenlines]{revtex4}
\usepackage{amssymb}
\usepackage{graphicx}
\usepackage{amsmath}
\usepackage{amsbsy}
\usepackage{amsthm}
\usepackage{bbm}
\usepackage{bm}
\theoremstyle{plain}

\newtheorem{lem}{Lemma}
\newtheorem{cor}{Corollary}
\newtheorem{prop}{Proposition}
\theoremstyle{definition}
\newtheorem{rem}{Remark}
\newtheorem{defin}{Definition}
\newcommand{\id}{\mathbbm{1}}
\newcommand{\tr}{\textnormal{Tr}}

\begin{document}

\title{Characterizing entanglement with\\ geometric entanglement witnesses}
\author{Philipp Krammer} \email{philipp.krammer[at]univie.ac.at}
\affiliation{Faculty of Physics, University of Vienna,
Boltzmanngasse 5, A-1090 Vienna, Austria}

\begin{abstract}
We show how to detect entangled, bound entangled, and separable
bipartite quantum states of arbitrary dimension and mixedness using
geometric entanglement witnesses. These witnesses are constructed
using properties of the Hilbert-Schmidt geometry and can be shifted
along parameterized lines. The involved conditions are simplified
using Bloch decompositions of operators and states. As an example we
determine the three different types of states for a family of
two-qutrit states that is part of the ``magic simplex'', i.e. the
set of Bell-state mixtures of arbitrary dimension.
\end{abstract}

\keywords{entanglement, bound entanglement, separability,
entanglement witness, qutrit, Bloch vector} \pacs{03.67.Mn,
03.65.Ca, 03.65.Ta, 03.67.Hk}

\maketitle

\section{Introduction}

Entanglement is a fascinating curiosity of quantum physics that
distinguishes it considerably from classical concepts
\cite{schroedinger35}. On the one hand it implicates surprising
philosophical aspects such as the incompatibility of local realistic
theories with quantum physics \cite{bell64, clauser69}, on the other
hand it can be successfully implied in quantum information and
quantum communication tasks to improve quantum protocols with
respect to classical ones (for an overview see, e.g., Refs
\cite{nielsen00, bouwmeester00, bertlmann02a}).

It is still an open mathematical problem to determine whether a
quantum state is entangled or not, there is no operational procedure
for an arbitrary state, although there are many useful criteria that
allow a detection of entanglement in many cases \cite{bruss02,
horodecki01, horodecki07}. For pure states and lower dimensional
bipartite systems (e.g., two qubits) the problem is solved, since
there exist applicable necessary and sufficient conditions for
separability (i.e. non-entanglement) \cite{horodecki96}.
Additionally, entangled states can be classified according to their
\emph{distillability}: A distillable state can be ``distilled'' to a
(nearly) maximally entangled state via statistical local operations
and classical communication (SLOCC). States that are not distillable
are called \emph{bound entangled}, whereas distillable states are
called \emph{free entangled} \cite{horodecki97a, horodecki98,
rains99, horodecki07}. Examples of bound entangled states and
construction procedures can be found in Refs.~\cite{bennett98,
divincenzo03, horodecki99c, bruss00, hyllus04}.

In this article we want to present a method to classify entanglement
with entanglement witnesses, which provide a well-established tool
to detect entanglement. Our approach is based on two concepts: We
construct witnesses in a geometrically intuitive way and use Bloch
decompositions of operators and states to simplify the mathematical
application. We show how to use geometric entanglement witnesses to
detect more entangled states than the PPT criterion and how to
identify separable states by using optimal entanglement witnesses.
This is attained by shifting the witnesses along parameterized lines
of states. Two main methods are explained in detail: the
\emph{outside-in shift} and the \emph{inside-out shift}. The
\emph{outside-in shift} is used for detecting more (bound) entangled
states, whereas with the \emph{inside-out shift} we construct the
shape of the set of separable states.

The paper is organized as follows: In Sec.~\ref{secbasics} we give
an overview of the mathematical basics of entanglement theory that
will be used throughout this article. We present a formulation of
the entanglement witness criterion in terms of Bloch decompositions
in Sec.~\ref{secbloch} and explain a geometric method to construct
entanglement witnesses in  Sec.~\ref{secgew}, where we also
introduce the two shift methods. The results are applied to a family
of states that are part of a special simplex (called ``magic
simplex'') in the Hilbert-Schmidt space of two qutrits in
Sec.~\ref{secexample}, where illustrative geometric pictures are
obtained identifying regions of entangled, bound entangled, and
separable states.

\section{Basic concepts of entanglement theory} \label{secbasics}

We consider a bipartite Hilbert-Schmidt space ${\cal A}:= {\cal A}_A
\otimes {\cal A}_B$ on a discrete finite dimensional Hilbert space
${\cal H}:= {\cal H}_A \otimes {\cal H}_B$ of dimension $d_A \times
d_B$, $D:= d_A d_B$. Vector states $| \psi \rangle$ are elements of
${\cal H}$ whereas density operators (called ``density matrices'' or
just ``states'') are elements of ${\cal A}$, a Hilbert space of
operators on ${\cal H}$, with a scalar product
\begin{equation}
    \langle A, B \rangle = \tr A^\dag B , \quad \mbox{with }
    A, \, B \in {\cal A} \,.
\end{equation}
States $\rho \in {\cal A}$ are defined by the properties
\begin{equation}
    \rho^\dag = \rho, \ \tr \rho = 1, \ \rho \geq 0 \,,
\end{equation}
i. e. they are Hermitian, have trace one, and are positive
semidefinite, i.e. have nonnegative eigenvalues. For the sake of
simplicity we will often drop the term ``semidefinite'' in the text
and write ``positive'' only, meaning positive semidefinite. All
states $\rho$ satisfy the inequality $\tr \rho^2 \leq 1$, and are
classified as pure states ($\tr \rho^2 = 1$) or mixed states ($\tr
\rho^2 < 1$).

The Hilbert-Schmidt distance of two states $\rho_1$ and $\rho_2$ is
given by
\begin{equation} \label{hsdistance}
    d( \rho_1, \rho_2) := \| \rho_1 - \rho_2 \| := \langle \rho_1 -
    \rho_2, \rho_1 - \rho_2 \rangle ^{1/2} \,.
\end{equation}

Density operators can be represented as a matrix using an
orthonormal basis of ${\cal H}$. The standard basis representation
is
\begin{equation}
    \rho_{i j, k l} := \langle i j | \rho | k l \rangle
\end{equation}
where $| i j \rangle := | i \rangle \otimes | j \rangle$ and
$\left\{ | i j \rangle \right\}$ is the standard product basis of a
$d_A \times d_B$ dimensional system.

A state $\rho$ is called \emph{entangled} if it cannot be written as
a convex combination of product states \cite{werner89},
\begin{equation} \label{sepstate}
    \rho \neq \sigma = \sum_i p_i \rho_A^i \otimes \rho_B^i \,,
\end{equation}
where $\rho_A^i$ and $\rho_B^i$ are states of the two subsystems,
usually called ``Alice'' and ``Bob'' following the convention of
quantum communication. States $\sigma$ \eqref{sepstate} are called
\emph{separable}, they are not entangled and contain classical
correlations only. There is still no operational method (i.e., no
``recipe'') to decide for a given bipartite state of arbitrary
dimension wether it it can be written as the convex combination
\eqref{sepstate} or not. This problem is known as the separability
problem. However, there exist several criteria which help to find
the entanglement properties of states in special cases, for an
overview see, e.g., Refs.~\cite{bruss02, horodecki07, guehne08}

A criterion that is necessary and sufficient, but not operational,
is the entanglement witness criterion (EWC) \cite{horodecki96,
terhal00, terhal02, bertlmann02, bertlmann05}. It says that a state $\rho$
is entangled if and only if it can be ``witnessed'' by some
Hermitian operator $A$, for which
\begin{eqnarray}
    \left\langle \rho,A \right\rangle \;=\; \tr \, \rho A
    & \;<\; & 0 \,, \label{defentwitent} \\
    \left\langle \sigma,A \right\rangle = \tr \, \sigma A & \;\geq\; & 0 \qquad
    \forall \,\sigma \in {\cal S} \,, \label{defentwitsep}
\end{eqnarray}
where ${\cal S}$ denotes the convex and compact set of all separable
states, and $A$ is called an \emph{entanglement witness}. We call
inequality \eqref{defentwitent} ``entanglement condition'' and
inequality \eqref{defentwitsep} ``separability condition''. If there
exists a separable state $\tilde{\sigma}$ for which $\tr
\tilde{\sigma} A = 0$, then $A$ is called an \emph{optimal}
entanglement witness. It is closest to the set of separable states
and thus detects more entangled states than non-optimal witnesses.

The EWC is a consequence of the Hahn-Banach theorem of functional
analysis. It geometrically corresponds to the fact that an element
of a Banach space can always be separated by a hyperplane from a
convex and compact subset that does not contain the element (see,
e.g., Ref.~\cite{reed72} and Refs.~\cite{bruss02, bertlmann05} for
illustrations). Although it is intuitive and simple, the EWC is not
easy to implement given an arbitrary state $\rho$, since in general
it is difficult to find a suitable witness that satisfies
Eq.~\eqref{defentwitsep}, and even more difficult to state that
there does not exist any witness for this state, which would imply
separability. Nevertheless the criterion plays an important role in
the theoretical understanding of entanglement, and has the advantage
that a witness $A$ corresponds to a physical observable that can be
implemented in experiments. It therefore allows a detection of
entanglement without performing a full tomography of the state
\cite{guehne02, guehne03, barbieri03, altepeter05, skwara07}.

An operational criterion that is a necessary condition for
separability is the positive partial transpose (PPT) criterion
\cite{peres96}. It simply says that a separable state $\sigma$ stays
positive under partial transposition,
\begin{equation} \label{PPT}
    \sigma^\Gamma := (\id \otimes T) \sigma \geq 0 \,,
\end{equation}
where the partial transpose is a transposition $T$ with respect to
Bob's system only, $\rho_{i j, k l}^\Gamma := \rho_{i l, k j}$. As a
proof of Eq.~\eqref{PPT} one just has to recognize that if we apply
the partial transposition to a separable state \eqref{sepstate} the
transposition is performed on $\rho_B^i$ only, which does not change
the positivity of the state, and thus the whole separable state
stays positive. Therefore, if a state $\rho$ violates the criterion,
i.e. it is no longer positive under partial transposition, it has to
be an entangled state. We call a state that is positive under
partial transposition PPT, and a state that is not NPT. For
dimensions $2 \times 2$ and $2 \times 3$ it can be shown that the
criterion is necessary and sufficient \cite{horodecki96}, i.e. any
entangled state has to be NPT. In higher dimensions, however, there
exist entangled states that are PPT, it can be proven that such
states are bound entangled. Note that the reverse is not necessarily
true, it is still an open question if all NPT states are free
entangled, although there are strong implications that NPT bound
entangled states exist \cite{duer00, divincenzo00}.

In order to find PPT entangled states, one has to employ criteria
that are not equal to or weaker than the PPT criterion. In principle
the EWC is strongest since it detects the entanglement of all
entangled states, but it is more cumbersome to apply.

Another useful criterion is the realignment criterion (or cross norm
criterion) \cite{rudolph00, rudolph02, rudolph03, chen03}, which
again is a necessary condition for separability. It states that for
any separable state the sum of the singular values $s_i$ of a
\emph{realigned} density matrix $\sigma_R$ has to be smaller than or
equal to one,
\begin{equation} \label{realign}
    \sum_i s_i = \tr \sqrt{\sigma^\dag_R \sigma_R} \leq 1 \,,
\end{equation}
where $\left( \rho_{ij,kl} \right)_R := \rho_{ik,jl}$. The
realignment criterion is neither weaker nor stronger than the PPT
criterion, meaning that it detects some entangled states that the
PPT criterion does not, and vice versa. Thus an application of both
criteria is easy to perform and allows a detection of many entangled
states, both free and bound entangled, but they do not constitute a
necessary and sufficient criterion together, since there exist PPT
entangled states that are not detected by the realignment criterion
\cite{chen03}.

\section{Bloch decompositions and entanglement witnesses}
\label{secbloch}

Bloch decompositions are a convenient way to handle calculations in
high dimensional systems, since the usage of large matrices can be
avoided (for an overview see \cite{bertlmann08a} and references
therein). There also exist computable separability criteria based on
Bloch decompositions of states \cite{pittenger02, devicente07}.

Let us consider the Hilbert-Schmidt space for a one-particle state
first, for example the space of Alice's subsystem ${\cal A}_A$ on
${\cal H}_A$ of dimension $d_A:=d$ (all considerations are
equivalent for Bob). Since the Hilbert-Schmidt space ${\cal A}_A$ is
a vector space of operators, one can decompose any element of ${\cal
A}_A$ into a linear combination of operators that form an orthogonal
basis of the Hilbert-Schmidt space. Let us identify such a basis of
$d^2$ operators with $\left\{ \id, A_i \right\}$, $i=1, \ldots,
d^2-1$. The operators $A_i$ are traceless, $\tr A_i = 0$ and satisfy
the orthogonality condition
\begin{equation} \label{mbasisorth}
    \tr A_i A_j = N_A \delta_{ij}, \quad N_A \in \mathbbm{R} \,.
\end{equation}
A one-particle qudit state can then be decomposed into the operator
basis as (for example a state $\rho_A$ for Alice's subsystem of
dimension d)
\begin{equation} \label{blochonepart}
    \rho_A = \frac{1}{d} \left( \id + \sqrt{\frac{d(d-1)}{N_A}} \sum_{i=1}^{d^2-1}
    n_i A_i \right), \quad n_i \in \mathbbm{C} \,,
\end{equation}
where $|\vec{n}|^2 = \sum_i n_i^*n_i \leq 1$. The coefficient vector
$\vec{n}$ is called \emph{Bloch vector}, it uniquely characterizes
the state. The constant $\sqrt{d(d-1)/N_A}$ results from the inequality
$\tr \rho^2 \leq 1$. The state is pure if and only if $|\vec{n}|^2 =
1$. In general not all arbitrary vectors $\vec{n}$ are Bloch
vectors, i.e. they do not necessarily imply $\rho \geq 0$, see
Remark~\ref{remcorbvew}.

A bipartite product state $ \sigma_p := \rho_A \otimes \rho_B$ on
${\cal H}$ of dimension $d_A \times d_B$ can be written as (where
Bob's orthogonal basis is $\{\id, B_j \}$)
\begin{align} \label{prodbloch}
    \sigma_p = \ & \frac{1}{d^2} \Big( \id_{\rm{d_A}} \otimes
        \id_{\rm{d_B}}
        + \sum_{i=0}^{d^2_A - 1} f_A \, n_i \, A_i \otimes \id_{\rm{d_B}}
        + \sum_{j=0}^{d^2_B - 1} f_B \, m_j \,\id_{\rm{d_A}} \otimes B_j
        + \sum_{i,j} f_A f_B \, n_i m_j A_i \otimes B_j \Big) \,,
        \nonumber\\
    & n_{nm}, m_{lk} \in \mathbbm{C}\,, \ \left| \vec{n} \right|
    \leq 1\,, \ \left| \vec{m} \right| \leq 1 \,, \quad f_A := \sqrt{\frac{d_A(d_A-1)}{N_A}}, \ f_B :=
        \sqrt{\frac{d_B(d_B-1)}{N_B}} \,,
\end{align}
where the state is pure if and only if $\left| \vec{n} \right| =
\left| \vec{m} \right| = 1$.

Any operator $O \in {\cal A}$ can be decomposed as
\begin{align} \label{opbloch}
    O = & \, e \, \id_{\rm{d_A}} \otimes
        \id_{\rm{d_B}}
        + \sum_{i=0}^{d^2_A - 1} \, a_i \, A_i \otimes \id_{\rm{d_B}}
        + \sum_{j=0}^{d^2_B - 1} \, b_j \,\id_{\rm{d_A}} \otimes B_j
        + \sum_{i,j} \, c_{ij} A_i \otimes B_j \,,
        \nonumber\\
    & e, a_i, b_i, c_{ij} \in \mathbbm{C} \,.
\end{align}
For a given operator $O$ and same dimensions of the subsystems
$d_A=d_B=:d$, one can always find an orthogonal basis in which the
coefficient matrix $C^{cor} = (c_{ij})$, called \emph{correlation
coefficient matrix}, is diagonal: Given an operator decomposition
\eqref{opbloch}, we have to perform a singular value decomposition
of $C$,
\begin{equation}
    S = U C^{cor} V^\dag \,,
\end{equation}
where $U$ and $V$ are unitary matrices with entries $u_{ij}$ and
$v_{ij}$ and S is the resulting diagonal matrix with the $d^2$
diagonal real positive singular values $s_i$ of $C^{cor}$ as
diagonal entries. The new basis operators $D_i^A$ and $D_i^B$ are then given by
a linear combination of the old operators,
\begin{equation}
    D^A_i = \sum_j u^*_{ij} A_j, \quad D^B_i = \sum_j v_{ij} B_j \,,
\end{equation}
which satisfy the same orthogonality condition, $\tr D^A_i D^A_j =
N_A \delta_{ij}$ (and equivalently for $D^B_i$). So we can rewrite
Eq.~\eqref{opbloch} as
\begin{align} \label{opblochsing}
    O = & \, e \, \id_{\rm{d}} \otimes
        \id_{\rm{d}}
        + \sum_{i=0}^{d^2 - 1} \, r_i \, D^A_i \otimes \id_{\rm{d}}
        + \sum_{j=0}^{d^2 - 1} \, t_j \,\id_{\rm{d}} \otimes D^B_j
        + \sum_{i,j} \, s_{i} D^A_i \otimes D^B_i \,,
        \nonumber\\
\end{align}
where $r_i = \sum_j a_j u_{ij}, \ t_j = \sum_k b_k v^*_{jk}$ and
$s_i = \sum_{j,l} u_{ij} c_{jl} v^*_{il}$. We call the decomposed
operator that is written in the optimized way of
Eq.~\eqref{opblochsing} ``singular value optimized'' (SVO). Of
course a product state can then also be decomposed in terms of the
new basis,
\begin{align} \label{prodblochsing}
    \sigma_p = \ & \frac{1}{d^2} \Big( \id_{\rm{d}} \otimes
        \id_{\rm{d}}
        + \sum_{i=0}^{d^2 - 1} f_A \, \bar{n}_i \, D^A_i \otimes \id_{\rm{d}}
        + \sum_{j=0}^{d^2 - 1} f_B \, \bar{m}_j \,\id_{\rm{d}} \otimes D^B_j
        + \sum_{i,j} f_A f_B \, \bar{n}_i \bar{m}_j D^A_i \otimes D^B_j \Big) \,,
        \nonumber\\
    & \bar{n}_{i}, \bar{m}_{j} \in \mathbbm{C}\,, \;\quad \left| \vec{\bar{n}}
        \right| \leq
        1\,,\quad \left| \vec{\bar{m}} \right| \leq 1 \,.
\end{align}

For our purposes we want to reformulate the separability condition
$\eqref{defentwitsep}$ of the EWC:
\begin{cor} \label{corewprod}
    An operator $C \in {\cal A}$ satisfies $\tr \sigma C \geq 0 \ \forall \sigma
    \in S$ if and only if $\tr \sigma_p C \geq 0$ for all pure product
    states $\sigma_p := \rho_A \otimes \rho_B$.
\end{cor}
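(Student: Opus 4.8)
The plan is to prove the two implications separately, the forward one being immediate and the reverse requiring a short two-layer convexity argument.

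For the ``only if'' direction I would simply note that every pure product state $\sigma_p = \rho_A \otimes \rho_B$ is itself a separable state, being the trivial single-term convex combination in \eqref{sepstate}. Hence if $\tr \sigma C \geq 0$ holds for all separable $\sigma$, it holds in particular for every $\sigma_p$, which gives the claim at once. The substance lies in the ``if'' direction, and here the strategy is to peel off the convex structure of separable states in two stages, exploiting the linearity of the functional $\sigma \mapsto \tr \sigma C$. First, take an arbitrary separable state in its defining form $\sigma = \sum_i p_i \, \rho_A^i \otimes \rho_B^i$ with $p_i \geq 0$ and $\sum_i p_i = 1$; linearity of the trace gives
\begin{equation}
    \tr \sigma C = \sum_i p_i \, \tr \big( \rho_A^i \otimes \rho_B^i \big) C \,,
\end{equation}
so it suffices to establish $\tr (\rho_A \otimes \rho_B) C \geq 0$ for a single, but otherwise arbitrary, product state. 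Second, I would decompose each subsystem state spectrally, $\rho_A = \sum_k q_k |\psi_k\rangle\langle\psi_k|$ and $\rho_B = \sum_l r_l |\phi_l\rangle\langle\phi_l|$, with nonnegative weights summing to one. Distributing the tensor product yields
\begin{equation}
    \rho_A \otimes \rho_B = \sum_{k,l} q_k r_l \, |\psi_k\rangle\langle\psi_k| \otimes |\phi_l\rangle\langle\phi_l| \,,
\end{equation}
which exhibits any product state as a convex combination of pure product states. One further application of linearity reduces $\tr (\rho_A \otimes \rho_B) C = \sum_{k,l} q_k r_l \, \tr \big( |\psi_k\rangle\langle\psi_k| \otimes |\phi_l\rangle\langle\phi_l| \big) C$ to the assumed inequality on each pure product term, and since the weights $q_k r_l$ are nonnegative the whole sum is nonnegative, completing the argument.

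There is no deep obstacle here; the result is essentially a bookkeeping exercise in convexity. The one point that warrants care is recognizing that the definition \eqref{sepstate} permits \emph{mixed} subsystem states $\rho_A^i, \rho_B^i$, so a single application of linearity only reduces the problem to general product states, not yet to pure ones. The extra spectral-decomposition step is exactly what bridges that gap, and it works because the tensor product of two convex combinations is again a convex combination, namely of the pairwise products of the pure components. Notably, neither compactness nor closedness of the separable set is needed for this equivalence; only the linearity of the trace functional together with the convex-combination structure of separable and product states enters the proof.
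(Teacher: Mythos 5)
Your proof is correct and follows essentially the same route as the paper: the ``only if'' direction by noting pure product states are separable, and the ``if'' direction by writing every separable state as a convex combination of pure product states (the paper's phrase ``because mixed states $\rho_A$ and $\rho_B$ are convex combinations of pure states'' is exactly your spectral-decomposition step, stated more tersely) and then using linearity of the trace. Your version merely makes the two-stage convexity bookkeeping explicit.
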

\begin{proof}
    If we have $\tr \sigma C \geq 0 \ \forall \sigma \in S$ then of
    course also $\tr \sigma_p C \geq 0$ since the pure pruduct states
    $\sigma_p$ are separable states as well. A separable state \eqref{sepstate}
    can be written as a convex combination of \emph{pure} product
    states, $\sigma = \sum_i p_i \sigma_p^i$, because mixed states $\rho_A$ and $\rho_B$ are
    convex combinations of pure states. Thus if $\tr \sigma_p C
    \geq 0$, it follows that $\tr \sigma C = \tr \sum_i p_i \sigma_p^i C  = \sum_i p_i
    \tr \sigma_p^i C \geq 0$ since $p_i \geq 0$.
\end{proof}
At first sight Corollary~\ref{corewprod} may appear redundant, but
it bears the advantage that in order to check if a given operator
satisfies the separability condition \eqref{defentwitsep}, we do not
have to check all separable states but consider pure product states
only, which implies a decrease in effort. Purity of the states is
not essential, but is more convenient in parameterizations.

For an arbitrary operator basis we use the Bloch decomposition
\eqref{opbloch} to write an Hermitian operator $C \in {\cal A}$ as
\begin{align} \label{cbloch}
    C = & \, \delta \left( \mu \id_{\rm{d_A}} \otimes
        \id_{\rm{d_B}}
        + \sum_{i=0}^{d^2_A - 1} \, \tilde{a}_i \, A_i \otimes \id_{\rm{d_B}}
        + \sum_{j=0}^{d^2_B - 1} \, \tilde{b}_j \,\id_{\rm{d_A}} \otimes B_j
        + \sum_{i,j} \, \tilde{c}_{ij} A_i \otimes B_j \right) \,,
        \nonumber\\
        & \mu := \sqrt{(d_A-1)(d_B-1)}, \ \delta \in \mathbbm{R}^+ \,.
\end{align}
Note that we are only interested in Hermitian operators $C$ with positive values of $\delta$. If $\delta$ happens to be negative in the first place we switch to the operator with an additional overall minus sign. For $d_A = d_B = d$ we use Eq.~\eqref{opblochsing} to obtain the SVO
form
\begin{align} \label{cblochsing}
    C = & \, \delta \left( (d-1) \id_{\rm{d}} \otimes
        \id_{\rm{d}}
        + \sum_{i=0}^{d^2 - 1} \, \tilde{r}_i \, D^A_i \otimes \id_{\rm{d}}
        + \sum_{j=0}^{d^2 - 1} \, \tilde{t}_j \,\id_{\rm{d}} \otimes D^B_j
        + \sum_i \, \tilde{s}_{i} D^A_i \otimes D^B_i \right) \,,
        \nonumber\\
\end{align}
and get for the expectation value with product states $\sigma_p$
from Eqs.~\eqref{prodbloch} and \eqref{cbloch} (we use
$\sigma_p^\dag$ in order to conveniently utilize the orthogonality
condition \eqref{mbasisorth})
\begin{equation} \label{expprod}
    \tr \sigma_p^\dag C = \delta \mu \left( 1 + \sqrt{
    \frac{d_B}{N_B(d_B-1)}} \sum_i \tilde{a}_i n^*_i + \sqrt{
    \frac{d_A}{N_A(d_A-1)}} \sum_j \tilde{b}_j m^*_j + \sum_{i,j} \tilde{c}_{ij}
    n^*_i m^*_j \right) \,,
\end{equation}
which simplifies for $d_A = d_B = d$ to (using
Eqs.~\eqref{prodblochsing} and \eqref{cblochsing})
\begin{equation} \label{expprodsing}
    \tr \sigma_p^\dag C = \delta (d-1) \left( 1 + \sqrt{
    \frac{d}{N_B(d-1)}} \sum_i \tilde{r}_i \bar{n}^*_i + \sqrt{
    \frac{d}{N_A(d-1)}} \sum_j \tilde{t}_j \bar{m}^*_j + \sum_i \tilde{s}_i
    \bar{n}^*_i \bar{m}^*_i \right) \,.
\end{equation}
Using the above expressions for the expectation values we obtain a
condition for $\tr C \sigma_p \geq 0$ in Corollary \ref{corewprod}
in terms of Bloch decompositions:
\begin{cor} \label{corbvew}
    Given a decomposition \eqref{cbloch} of an operator $C$ into an arbitrary
    operator basis,
    the expectation value for any product state \eqref{prodbloch} is positive or
    vanishes, $\tr C \sigma_p \geq 0$, if and only if
    \begin{align} \label{corbvewcond}
        & S := \sqrt{
        \frac{d_B}{N_B(d_B-1)}} \sum_i \tilde{a}_i n^*_i + \sqrt{
        \frac{d_A}{N_A(d_A-1)}} \sum_j \tilde{b}_j m^*_j + \sum_{i,j} \tilde{c}_{ij}
        n^*_i m^*_j \geq -1
    \end{align}
    for all Bloch vectors $\vec{n}$, $\vec{m}$. For equal dimensions of the
    subsystems, $d_A = d_B = d$,
    the condition \eqref{corbvewcond} can be simplified to
    \begin{equation} \label{corbvewcondsing}
        S = \sqrt{
    \frac{d}{N_B(d-1)}} \sum_i \tilde{r}_i \bar{n}^*_i + \sqrt{
    \frac{d}{N_A(d-1)}} \sum_j \tilde{t}_j \bar{m}^*_j + \sum_i \tilde{s}_i
    \bar{n}^*_i \bar{m}^*_i \geq -1 \,,
    \end{equation}
    where we used the SVO form \eqref{cblochsing} of $C$.
\end{cor}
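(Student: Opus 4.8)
The plan is to read off the result directly from the expectation-value identity \eqref{expprod}, which already carries all of the computational weight. First I would note that a product state $\sigma_p$ is a density operator and therefore Hermitian, so $\sigma_p^\dag = \sigma_p$; combined with the cyclicity of the trace this gives $\tr C \sigma_p = \tr \sigma_p C = \tr \sigma_p^\dag C$. Hence the quantity that controls the separability condition is exactly the left-hand side of \eqref{expprod}, and no further expansion is needed.

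Next I would invoke \eqref{expprod} itself, which expresses this expectation value as $\tr \sigma_p^\dag C = \delta \mu \, (1 + S)$, with $S$ the combination of Bloch components appearing in \eqref{corbvewcond}. The key observation is that the prefactor $\delta \mu$ is strictly positive: by the sign convention fixed after \eqref{cbloch} we have $\delta \in \mathbbm{R}^+$, and $\mu = \sqrt{(d_A-1)(d_B-1)} > 0$ since $d_A, d_B \geq 2$. Dividing by this positive constant preserves the inequality, so for each fixed $\sigma_p$ one has $\tr C \sigma_p \geq 0$ precisely when $1 + S \geq 0$, i.e.\ when $S \geq -1$. Because this equivalence holds pointwise for every product state, the biconditional of the corollary is immediate: requiring $\tr C \sigma_p \geq 0$ for every product state \eqref{prodbloch} is, by construction of the Bloch parameterization, the same as requiring $S \geq -1$ for all Bloch vectors $\vec{n}, \vec{m}$, which establishes \eqref{corbvewcond}.

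Finally, the equal-dimension statement \eqref{corbvewcondsing} follows by the identical argument applied to the SVO expressions: using \eqref{expprodsing} together with the SVO form \eqref{cblochsing}, for which $\mu = d-1$, gives $\tr \sigma_p^\dag C = \delta (d-1)(1 + S)$ with $S$ now the reduced sum over the singular-value data, and the same division by the positive prefactor $\delta(d-1)$ yields the equivalence. I do not expect any genuine obstacle here, since the nontrivial computation --- expanding $\tr \sigma_p^\dag C$ via the orthogonality relation \eqref{mbasisorth} --- is already carried out in \eqref{expprod}. The only points requiring care are the strict positivity of the prefactor, so that the direction of the inequality is maintained, and the remark that the admissible $\vec{n}, \vec{m}$ are exactly the Bloch vectors of states, so that quantifying over product states and over Bloch vectors coincide.
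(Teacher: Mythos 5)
Your proposal is correct and follows exactly the route the paper takes: the paper's own proof simply states that the result is ``evident from the expressions for the expectation values in Eqs.~\eqref{expprod} and \eqref{expprodsing}'', and your argument makes this explicit by writing $\tr \sigma_p^\dag C = \delta\mu(1+S)$ (respectively $\delta(d-1)(1+S)$ in the SVO case) and dividing by the strictly positive prefactor. Your added care about the sign convention $\delta \in \mathbbm{R}^+$, the Hermiticity of $\sigma_p$, and the identification of admissible $\vec{n},\vec{m}$ with genuine Bloch vectors of states fills in precisely the details the paper leaves implicit.
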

\begin{proof}
    The proof is evident from the expressions for the expectation
    values in Eqs.~\eqref{expprod} and \eqref{expprodsing}.
\end{proof}
\begin{rem} \label{remcorbvew}
    Consider the case when there also exists at least one state $\rho$ for
    which $\tr C \rho < 0$. Then $C$ is an entanglement witness if $S \geq -1$.
    Note that by stating
    ``Bloch vector'' we mean vectors $\vec{n}$ that correspond to states
    (i.e. $\rho_A \geq 0$ in Eq.~\eqref{blochonepart}). For arbitrary dimensions
    $d$ of the Hilbert space an arbitrary vector $\vec{n} \in
    \mathbbm{C}^{d^2-1}$ for which $\rho$ has real eigenvalues
    does not always implicate $\rho \geq 0$,
    this is only true for $d=2$, where the familiar matrix basis out
    of the Pauli matrices or rotations thereof is used.
\end{rem}
\begin{rem}
    It directly follows from Corollaries~\ref{corewprod} and
    \ref{corbvew} that $C$ is an optimal entanglement witness if and
    only if there exists a state $\rho$ such that $\tr C \rho < 0$
    and Bloch vectors $\vec{n},\vec{m}$ such that $S = -1$.
\end{rem}
\begin{rem} \label{remmaxmixed}
    For operators $C$ $\eqref{cbloch}$ with vanishing coefficients $\tilde{a}_i$,
    $\tilde{b}_i$, condition \eqref{corbvewcond} reduces to
    \begin{equation} \label{bvcondmaxmix}
        \sum_{i,j} \tilde{c}_{ij}
        n^*_i m^*_j \geq -1
    \end{equation}
    and for $d_A=d_B$ condition \eqref{corbvewcondsing} reduces to
    \begin{equation} \label{bvcondmaxmixsing}
        \sum_i \tilde{s}_i
        \bar{n}^*_i \bar{m}^*_i \geq -1 \,.
    \end{equation}
    This is for example the case
    if we consider geometric operators (see Sec.~\ref{secgew})
    constructed of states that are \emph{locally maximally mixed},
    which means their reduced density matrices are the maximally
    mixed states $(1/d_A) \id$ and $(1/d_B) \id$.
\end{rem}
\begin{lem} \label{lemold}
    For operators $C$ $\eqref{cbloch}$ on a Hilbert space ${\cal H}$ of equal
    dimensional subsystems, $d_A = d_B$, with vanishing coefficients
    $\tilde{a}_i$, $\tilde{b}_i$, the expectation value for product
    states is greater or equal to zero, $\tr \sigma_p \geq 0$, if the
    singular values $\tilde{s}_i$ of the correlation coefficient matrix $\tilde{c}_{ij}$ are smaller
    or equal to one, $\tilde{s}_i \leq 1$.
\end{lem}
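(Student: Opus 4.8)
The plan is to reduce the assertion to the Bloch-vector reformulation already established and then to bound the relevant bilinear form by elementary inequalities. Since $d_A = d_B = d$ and the coefficients $\tilde{a}_i$, $\tilde{b}_i$ vanish, Corollary~\ref{corbvew} (see also Remark~\ref{remmaxmixed}) tells us that the condition $\tr \sigma_p C \geq 0$ for all product states is equivalent to
\begin{equation*}
    S = \sum_i \tilde{s}_i \, \bar{n}^*_i \bar{m}^*_i \geq -1
\end{equation*}
for all Bloch vectors $\vec{n}$, $\vec{m}$, where the $\bar{n}_i$, $\bar{m}_i$ are the components in the SVO basis. Hence it suffices to prove $S \geq -1$ under the hypothesis $\tilde{s}_i \leq 1$.

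First I would observe that it is enough to establish the stronger bound $|S| \leq 1$, which immediately yields $S \geq -1$. To this end I would apply the triangle inequality. Because the $\tilde{s}_i$ are singular values they are nonnegative, so
\begin{equation*}
    |S| \leq \sum_i \tilde{s}_i \, |\bar{n}_i| \, |\bar{m}_i| \leq \sum_i |\bar{n}_i| \, |\bar{m}_i| \,,
\end{equation*}
where the second inequality uses the hypothesis $0 \leq \tilde{s}_i \leq 1$.

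Next I would invoke the Cauchy-Schwarz inequality on the right-hand side,
\begin{equation*}
    \sum_i |\bar{n}_i| \, |\bar{m}_i| \leq \Big( \sum_i |\bar{n}_i|^2 \Big)^{1/2} \Big( \sum_i |\bar{m}_i|^2 \Big)^{1/2} = |\vec{\bar{n}}| \, |\vec{\bar{m}}| \,,
\end{equation*}
and then use the defining property of Bloch vectors, namely $|\vec{\bar{n}}| \leq 1$ and $|\vec{\bar{m}}| \leq 1$ (cf.\ Eq.~\eqref{prodblochsing}), to conclude $|S| \leq 1$ and therefore $S \geq -1$.

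I do not expect a serious obstacle: the argument is a clean chain of the triangle inequality, the hypothesis $\tilde{s}_i \leq 1$, Cauchy-Schwarz, and the norm constraint on Bloch vectors. The only points requiring a little care are to remember that the SVO components $\bar{n}_i$, $\bar{m}_i$ still obey the unit-norm bound inherited from Eq.~\eqref{prodblochsing}, and that singular values are nonnegative, so that the factors $\tilde{s}_i$ can be pulled out of the triangle inequality without a sign change. It is worth noting that the lemma provides only a \emph{sufficient} condition: the converse fails in general because Bloch vectors do not fill the entire unit ball, so the extremal configurations that would saturate Cauchy-Schwarz need not correspond to genuine states.
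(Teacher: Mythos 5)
Your proposal is correct and follows essentially the same route as the paper's own proof: reduce to the SVO form $S = \sum_i \tilde{s}_i \bar{n}^*_i \bar{m}^*_i$ via Corollary~\ref{corbvew}, apply the triangle inequality and the hypothesis $\tilde{s}_i \leq 1$, and then bound $\sum_i |\bar{n}_i||\bar{m}_i| \leq 1$. The only difference is cosmetic: you make explicit (via Cauchy--Schwarz and the norm constraint $|\vec{\bar{n}}|, |\vec{\bar{m}}| \leq 1$) the final inequality that the paper states without justification.
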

\begin{proof}
    With vanishing coefficients $\tilde{a}_i$, $\tilde{b}_i$, the term $S$ in
    Eq.~\eqref{corbvewcond} reduces to $S = \sum_{i,j} \tilde{c}_{ij}
    n_i m_j$. For $d_A = d_B$ we can write the operator in SVO form,
    which gives $S = \sum_i \tilde{s}_i \bar{n}^*_i \bar{m}^*_i$. With
    the condition $s_i \leq 1$ we get
    \begin{equation}
        |S| = \left| \sum_i \tilde{s}_i \bar{n}^*_i \bar{m}^*_i
        \right| \leq \sum_i \tilde{s}_i |\bar{n}^*_i| |\bar{m}^*_i| \leq \sum_i |\bar{n}^*_i|
        |\bar{m}^*_i| \leq 1
    \end{equation}
    and thus $S \geq -1$.
\end{proof}
\begin{rem} \label{remlemold}
    Note that Lemma~\ref{lemold} gives only a sufficient condition
    for satisfying the inequality $\tr \sigma_p \geq 0$. It is
    necessary for dimensions $2 \times 2$ only, since in this case
    any vectors $\vec{n}_i$ and $\vec{m}_i$
    \eqref{prodbloch} correspond to states, see Remark~\ref{remcorbvew},
    and with at least one singular value $s_i \geq 1$ one can easily construct
    Bloch vectors such that $S < -1$. For higher dimensions it is
    possible that some $s_i > 1$, and still there exists no Bloch
    vectors $\vec{n}_i$ and $\vec{m}_i$ (that provide
    $\rho \geq 0$) such that $S < -1$.
\end{rem}

\section{Geometric entanglement witnesses} \label{secgew}

\begin{defin} \label{defgeomop}
    A \emph{geometric operator} $G \in {\cal A}$ is defined as
    \begin{equation} \label{geomop}
        G := \rho_1 - \rho_2 - \langle \rho_1, \rho_1 - \rho_2 \rangle
        \id_D \,,
    \end{equation}
    where $\rho_1$ and $\rho_2$ are arbitrary
    states in ${\cal A}$ and $\rho_1 \neq \rho_2$.
\end{defin}
The definition originates from the construction of entanglement witnesses in Refs.~\cite{pittenger02, bertlmann02}, with the difference that in our definition the geometric operator \eqref{geomop} does not yet have to be an entanglement witness. The construction \eqref{geomop} provides $\tr \rho_1 G = 0$ and $\tr
\rho_2 G < 0$,
\begin{align}
    & \tr \rho_1 G = \langle \rho_1, G \rangle = \langle \rho_1 - \rho_1, \rho_1 - \rho_2 \rangle
    = 0 \,, \nonumber\\
    & \tr \rho_2 G = \langle \rho_2, G \rangle = \langle \rho_2 - \rho_1, \rho_1 - \rho_2 \rangle
    = - \| \rho_1 - \rho_2 \|^2 < 0 \,.
\end{align}
It corresponds to a hyperplane in the Hilbert-Schmidt space ${\cal
A}$ that divides the whole state space into states $\rho_n$ for
which $\tr \rho_n G < 0$ and states $\rho_p$ for which $\tr
\rho_p G \geq 0$, see Ref~\cite{bertlmann05}. The hyperplane
is orthogonal to $\rho_1 - \rho_2$ since for all states $\rho_G$ on
the plane, i.e. that satisfy $\tr \rho_G G = 0$, the operator
$\rho_1 - \rho_2$ is orthogonal to $\rho_G - \rho_1$ because $\tr
\rho_G G = \langle \rho_G - \rho_1, \rho_1 - \rho_2 \rangle = 0$.
\begin{defin} \label{geomentwit}
    A \emph{geometric entanglement witness (GEW)} $A_G$ is a geometric operator
    that satisfies $Tr \sigma_p A_G \geq 0$ for all pure product
    states $\sigma_p$.
\end{defin}
Due to its construction, a geometric entanglement witness (see also
Refs.~\cite{pittenger02, bertlmann02, pittenger03, bertlmann05,
bertlmann08, bertlmann08b} has to witness at least the entanglement
of $\rho_2$. For arbitrary states $\rho_2$ it is easy to construct
geometric operators $G$ (Definition~\ref{defgeomop}) that ensure
$\tr \rho_2 G < 0$, but difficult to confirm that also $\tr \sigma_p G \geq 0$ for all pure product states, which would yield $G = A_G$.
Nevertheless, due to their simple geometric construction, geometric
operators provide useful tools to characterize entanglement, as we
will see in the further sections. Other methods to construct and
optimize entanglement witnesses are given in
Refs.~\cite{lewenstein00, hyllus05, chruscinski08, ioannou04,
ioannou06}.

To detect entanglement it is sufficient to consider geometric
entanglement witnesses only:
\begin{lem}
    Any entangled state is witnessed by a geometric entanglement
    witness.
\end{lem}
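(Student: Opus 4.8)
The plan is to exploit the Hilbert-Schmidt geometry directly: given an entangled state, I construct an explicit geometric operator from that state together with its nearest separable state, and then verify that this operator is a GEW.

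First I would fix an entangled state and call it $\rho_2$, and let $\rho_1$ denote a separable state that minimizes the Hilbert-Schmidt distance $\| \rho_1 - \rho_2 \|$ over the set $\mathcal{S}$. Since $\mathcal{S}$ is convex and compact and the Hilbert-Schmidt norm is induced by an inner product, such a closest point exists (and is in fact unique); moreover $\rho_1 \neq \rho_2$ because $\rho_2 \notin \mathcal{S}$. With these two states I form the geometric operator $G$ of Definition~\ref{defgeomop}. By its construction $\tr \rho_2 G = - \| \rho_1 - \rho_2 \|^2 < 0$, so $G$ already witnesses the entanglement of $\rho_2$; it remains only to show that $G$ satisfies the separability condition and is therefore a GEW.

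The central step is a variational argument at the minimizer. For an arbitrary separable state $\sigma$, convexity of $\mathcal{S}$ places the whole segment $\rho_1 + t(\sigma - \rho_1)$, $t \in [0,1]$, inside $\mathcal{S}$. Since $\rho_1$ minimizes the squared distance to $\rho_2$ along this segment, differentiating $\| \rho_2 - \rho_1 - t(\sigma - \rho_1) \|^2$ at $t=0$ and using that the one-sided derivative must be nonnegative yields $\langle \rho_1 - \rho_2, \sigma - \rho_1 \rangle \geq 0$. Expanding $\tr \sigma G = \langle \sigma, G \rangle$ and using $\tr \sigma = 1$ gives precisely $\tr \sigma G = \langle \sigma - \rho_1, \rho_1 - \rho_2 \rangle$, which equals the nonnegative quantity just obtained, because the Hilbert-Schmidt inner product is real and symmetric on Hermitian operators. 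Hence $\tr \sigma G \geq 0$ for every separable $\sigma$, and by Corollary~\ref{corewprod} it suffices to verify this on pure product states, so $G = A_G$ is a geometric entanglement witness that detects the given entangled state.

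I expect the main obstacle to be the variational inequality: one must argue that the optimality of $\rho_1$ over the convex body $\mathcal{S}$ translates into the hyperplane through $\rho_1$ orthogonal to $\rho_1 - \rho_2$ being a \emph{supporting} hyperplane, so that all of $\mathcal{S}$ lies on the nonnegative side rather than merely separating $\rho_2$. Existence of the closest point follows from compactness, but it is this geometric supporting-hyperplane fact that upgrades $G$ from a functional separating $\rho_2$ alone into a genuine witness for the entire separable set.
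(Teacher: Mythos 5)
Your proof is correct and follows essentially the same route as the paper: both take the nearest separable state $\rho_1$ to the entangled state $\rho_2$ (it exists because $\mathcal{S}$ is convex and compact) and form the geometric operator of Definition~\ref{defgeomop} from this pair. The only difference is one of rigor --- where the paper simply asserts that the resulting hyperplane contains $\rho_1$, is orthogonal to $\rho_1-\rho_2$, and is ``therefore tangent to $\mathcal{S}$'' (deferring details to the cited references), you supply that supporting-hyperplane step explicitly via the variational inequality $\langle \rho_1 - \rho_2, \sigma - \rho_1 \rangle \geq 0$ at the minimizer, which is exactly the fact needed to conclude $\tr\,\sigma G \geq 0$ for all separable $\sigma$.
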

\begin{proof}
    If $\rho$ is entangled, then there exists a so-called
    \emph{nearest separable state} $\sigma_0$, i.e. the separable state for
    which the Hilbert-Schmidt distance \eqref{hsdistance} from
    $\rho$ to the set of separable states
    ${\cal S}$ is minimal, because ${\cal S}$ is convex and compact. The corresponding geometric operator
    $\sigma_0 - \rho - \langle \sigma_0, \sigma_0 - \rho \rangle
    \id_D$ is an entanglement witness, since the
    corresponding hyperplane includes $\sigma_0$, is orthogonal to $\sigma_0 - \rho$
    and is therefore tangent to ${\cal S}$. For more details on
    nearest separable states see Refs.~\cite{witte99, bertlmann02,
    bertlmann05}.
\end{proof}
Geometric entanglement witnesses bear the advantage that they can be
``shifted'' along lines of parameterized states.
\begin{prop}[Shift method] \label{shiftmethod}
    If a geometric operator
    \begin{equation} \label{shiftop}
        G_\lambda = \rho_\lambda - \rho - \langle \rho_\lambda,
        \rho_\lambda - \rho \rangle \id_{\rm{D}}
    \end{equation} \label{paramstates}
    with a parameterized family of states
    \begin{equation} \label{rholambda}
        \rho_\lambda := \lambda \rho + (1-\lambda) \tilde{\rho}, \quad 0
        \leq \lambda < 1, \ \rho, \tilde{\rho} \in {\cal A}
    \end{equation}
    is an entanglement witness in a parameter region $\lambda \in [
    \lambda_i, 1 )$, i.e. if it satisfies $\tr \sigma_P G_\lambda \geq 0$
    for all pure product states $\sigma_P$, then $\rho_\lambda$
    is entangled for $\lambda \in ( \lambda_i, 1 ]$.
\end{prop}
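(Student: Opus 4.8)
The plan is to detect every state on the open segment by the single witness sitting at the inner endpoint, namely $G_{\lambda_i}$. By hypothesis $G_{\lambda_i}$ satisfies $\tr \sigma_P G_{\lambda_i} \geq 0$ for all pure product states, so by Corollary~\ref{corewprod} it satisfies the separability condition \eqref{defentwitsep} on the whole set $\mathcal{S}$; it is therefore a genuine entanglement witness, and any state $\tau$ with $\tr \tau G_{\lambda_i} < 0$ is necessarily entangled. Hence it suffices to prove the single scalar inequality $\tr \rho_\mu G_{\lambda_i} < 0$ for every $\mu \in (\lambda_i, 1]$.

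First I would expand the Hilbert--Schmidt inner product. Using $\tr \rho_\mu = 1$ together with the definition \eqref{shiftop} of the shifted geometric operator, and the fact that $\rho_\mu$ is Hermitian, I obtain
\begin{equation} \label{planexpand}
    \tr \rho_\mu G_{\lambda_i}
    = \langle \rho_\mu, \rho_{\lambda_i} - \rho \rangle
      - \langle \rho_{\lambda_i}, \rho_{\lambda_i} - \rho \rangle
    = \langle \rho_\mu - \rho_{\lambda_i}, \rho_{\lambda_i} - \rho \rangle \,,
\end{equation}
where the identity term only contributes the constant $\langle \rho_{\lambda_i}, \rho_{\lambda_i} - \rho \rangle \tr \rho_\mu$. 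The next step is to exploit the collinearity built into the family \eqref{rholambda}: every difference of two members of $\{\rho_\lambda\}$ is a real multiple of the fixed direction $\rho - \tilde\rho$. Concretely, $\rho_\mu - \rho_{\lambda_i} = (\mu - \lambda_i)(\rho - \tilde\rho)$ and $\rho_{\lambda_i} - \rho = -(1-\lambda_i)(\rho - \tilde\rho)$, so substituting into \eqref{planexpand} collapses the inner product to
\begin{equation} \label{plansign}
    \tr \rho_\mu G_{\lambda_i}
    = -(\mu - \lambda_i)(1 - \lambda_i)\, \| \rho - \tilde\rho \|^2 \,.
\end{equation}

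It then remains to verify the sign. Since the geometric operator requires $\rho_{\lambda_i} \neq \rho$ and $\rho_{\lambda_i} - \rho = (1-\lambda_i)(\tilde\rho - \rho)$ with $\lambda_i < 1$, we must have $\tilde\rho \neq \rho$ and hence $\| \rho - \tilde\rho \|^2 > 0$. For $\mu \in (\lambda_i, 1]$ both $\mu - \lambda_i > 0$ and $1 - \lambda_i > 0$, so the right-hand side of \eqref{plansign} is strictly negative; therefore $\rho_\mu$ is witnessed by $G_{\lambda_i}$ and is entangled. I do not expect a serious obstacle here, as the argument is a short direct computation. The only point requiring care is the sign bookkeeping together with the observation that all relevant state differences lie along the single line $\rho - \tilde\rho$, which is exactly what reduces \eqref{planexpand} to a negative multiple of $\| \rho - \tilde\rho \|^2$. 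It is also worth noting why the \emph{inner} endpoint is the right choice: an analogous computation gives $\tr \rho_\mu G_\lambda = -(\mu-\lambda)(1-\lambda)\|\rho-\tilde\rho\|^2$, which is negative only for $\mu > \lambda$, so taking $\lambda = \lambda_i$ is precisely what yields the full detection range $(\lambda_i, 1]$.
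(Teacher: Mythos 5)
Your proposal is correct and is essentially the paper's own proof: both fix the witness $G_{\lambda_i}$ at the inner endpoint, expand $\tr \rho_\lambda G_{\lambda_i} = \langle \rho_\lambda - \rho_{\lambda_i}, \rho_{\lambda_i} - \rho \rangle$ using $\tr \rho_\lambda = 1$, and use collinearity of the family \eqref{rholambda} to reduce this to $(\lambda_i - \lambda)(1-\lambda_i)\|\rho - \tilde{\rho}\|^2 < 0$. Your added remarks on why $\|\rho - \tilde{\rho}\| > 0$ and why the inner endpoint is the right choice are fine elaborations of steps the paper leaves implicit.
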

\begin{proof}
    We consider states $\rho_\lambda$ with $\lambda_i < \lambda \leq
    1$ and the geometric entanglement witness $A_{\lambda_i} = \rho_{\lambda_i} - \rho
    - \langle \rho_{\lambda_i}, \rho_{\lambda_i} - \rho \rangle
    \id_{\rm{D}}$. The expectation value in $\rho_\lambda$ is
    \begin{align}
        \tr \rho_\lambda A_{\lambda_i} & = \langle \rho_\lambda, A_{\lambda_i} \rangle =
        \langle \rho_\lambda - \rho_{\lambda_i}, \rho_{\lambda_i} - \rho
        \rangle \nonumber\\
        & = (\lambda_i - \lambda) (1- \lambda_i) \langle \rho
        - \tilde{\rho}, \rho - \tilde{\rho} \rangle = (\lambda_i - \lambda) (1- \lambda_i)
        \| \rho - \tilde{\rho} \|^2 < 0 \,,
    \end{align}
    hence the states $\rho_\lambda$ with $\lambda_i < \lambda \leq
    1$ are entangled.
\end{proof}
An effective way to use the shift method of
Proposition~\ref{shiftmethod} is to identify $\tilde{\rho}$ in
Eq.~\eqref{rholambda} with a separable state, and $\rho$ with a
state which is known to be entangled. There are two cases where this
is of particular interest:
\begin{enumerate}
\item (\emph{Outside-in shift.}) Starting from the entangled state
    $\rho$, we can detect further entangled states along the line in
    direction to the separable state. By proofing that $G_\lambda$
    is an entanglement witness for a
    parameter region $\lambda_i \leq \lambda < 1$ (the case $\lambda =
    1$ can be included with a suitable normalization of $G_\lambda$),
    one can infer that all states $\rho_\lambda$ within this region are
    entangled. A reasonable choice for the
    separable state is the maximally mixed state $(1/D) \id$. In this way
    one can detect bound entangled states, for example if we choose
    a PPT entangled ``starting state'' $\rho$, then we are likely to find
    more bound entangled states along the parameterized line $\rho_\lambda$. The
    outside-in shift is illustrated in Fig.~\ref{figoutsidein}. See
    Refs.~\cite{bertlmann08, bertlmann08b} for application examples.
    In Ref.~\cite{bandyopadhyay08} a similar approach with
    parameterized lines between PPT entangled states and the maximally mixed state
    is used to identify families of bound
    entangled states in the context of robustness of entanglement.
    \begin{figure}
        \includegraphics[width=0.4\textwidth]{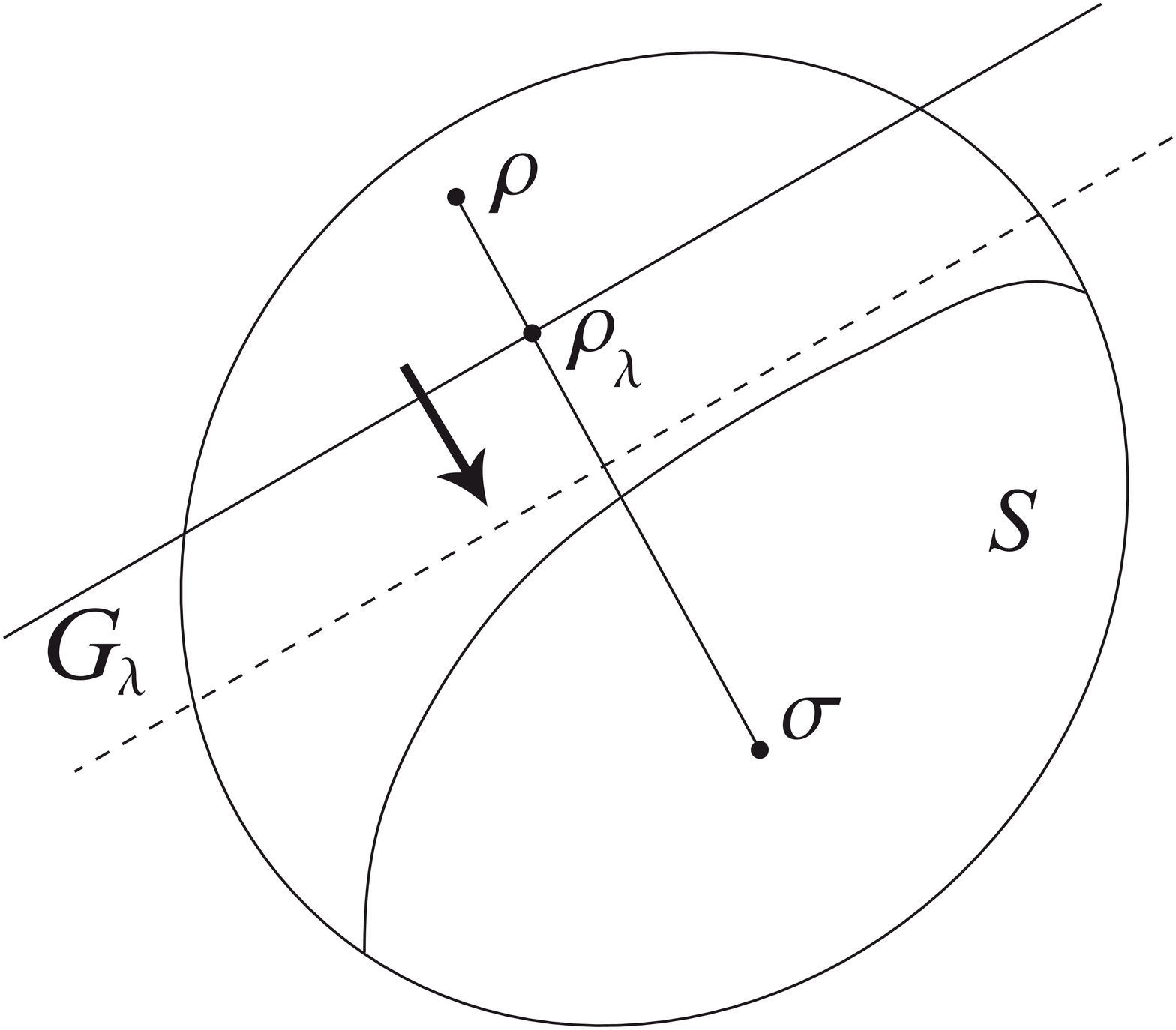}\\
        \caption{Outside-in shift method. On the line between the entangled
        state $\rho$ and the separable state $\sigma$ one can detect more entangled
        states, e.g. bound entangled states, by shifting the geometric entanglement
        witness.} \label{figoutsidein}
    \end{figure}
\item (\emph{Inside-out shift.}) Another application of the shift
    method is the step-by-step construction of the convex set of separable states.
    Here one has to use optimal GEWs that correspond to hyperplanes
    tangent to the set of separable states. Let us assume we are given a specific
    convex subset of states for which we want to determine the entanglement properties
    and that some separable states are
    known. From these we can constuct a \emph{kernel polytope} of
    separable states, i.e. the convex hull of the known separable
    states. Then we assign geometric operators to hyperplanes tangent to
    the kernel polytope. For example, an operator corresponding to a plane
    that includes the line between two separable states
    can be constructed in the following way:
    Given two separable states $\sigma_1$ and $\sigma_2$, the convex
    line between them is $\sigma_{\mu} = \mu \sigma_1 + (1-\mu)
    \sigma_2$. Now we choose an entangled state $\omega$, that of course lies outside the
    kernel polytope, such that there exists a $\mu_i$ with $\tilde{\sigma} =
    \sigma_{\mu_i}$ for which we have the orthogonality condition
    $\langle \sigma_1 - \sigma_2, \omega -
    \tilde{\sigma} \rangle = 0$. The geometric operator is then
    given by
    \begin{equation}
        G = \tilde{\sigma} - \omega - \langle \tilde{\sigma},
        \tilde{\sigma} - \omega \rangle \id_{\rm{D}}
    \end{equation}
    and a shift operator $G_\lambda$ between $\tilde{\sigma}$ and
    $\omega$ according to
    Eqs.~\eqref{shiftop} and \eqref{rholambda}. The construction
    of operators that correspond to boundary planes of the kernel
    polygon identified by more than two separable states is done
    similarily, using more orthogonality conditions and the convex
    hull between three states.

    Once we assigned geometric operators to the boundary of the
    kernel polytope, we utilize Proposition~\ref{shiftmethod} to
    ``shift'' the operators outside and survey the minimum of $S$ in
    Eq.~\eqref{corbvewcond} or \eqref{corbvewcondsing}. At one point of the
    parameterized line \eqref{rholambda} we obtain $S = -1$;
    the geometric operators become optimal geometric
    entanglement witnesses. In this way we can assemble the
    shape of the set of separable states for the considered set of
    states and distinguish it from the set of entangled states. It may likely be
    that we have an idea of the shape of the set of separable states that we got from
    applying necessary separability criteria. Then we can use the inside-out shift
    to verify or falsify that shape: The inside-out shifted geometric operators
    should correspond to optimal GEWs when
    they become tangent to the estimated shape. In this way we get vertices of a new
    polytope, whose boundary planes are shifted again. Thus
    we either verify the estimated shape of separable states, or,
    if a shifted plane is an optimal GEW before it is tangent to the shape,
    it is an enclosure of all
    separable states and also entangled ones. We require finite steps of this method
    if the estimated shape is a polygon, and (in principle) infinite steps if it
    is not a polygon, i.e. if it has a curved surface.

    If we have no idea of a possible shape of the set of
    separable states, or if our estimation turned out to be wrong, we can
    use the inside-out shift to obtain at least a tight enclosure polytope.
    It is a polytope that encloses all separable states but might also contain some
    entangled states, it can be obtained by applying the shift to more
    than one kernel polytope. Both situations are sketched in
    Fig.~\ref{figinsideout}.
    \begin{figure}
        \includegraphics[width=0.6\textwidth]{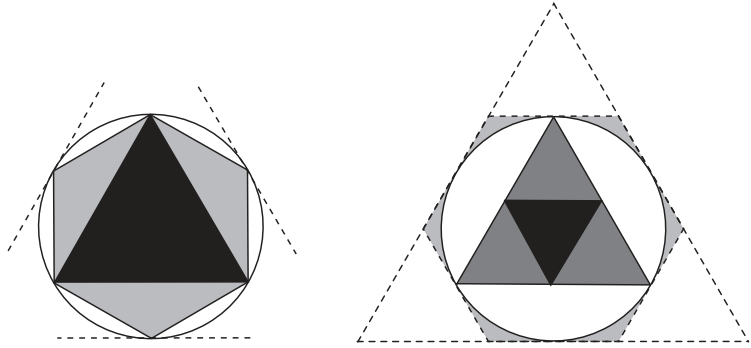}\\
        \caption{Left: Sketch of the inside-out shift with an
        estimate of the shape of separable states, which in this
        case coincides with the true set of separable states,
        pictured by the circle.
        We start with a kernel polytope (black triangle) and shift the boundary planes outside
        until they become optimal GEWs, which are tangents to the circle (dashed lines).
        In this way we can draw a new polytope (hexagon, grey).
        In the next steps (not illustrated) the boundaries of the
        new polytope are shifted and we gain a new polytope, and so
        on. In this way we reconstruct the circle shape.\\
        Right: Sketch of the inside-out shift where we do not rely on an estimate of the shape of
        separable states. The true set of separable states is again pictured by a circle.
        Here we get a first enclosure polytope (biggest triangle with dashed lines),
        by shifting the boundaries of a first kernel polytope outside (dark grey triangle).
        A tighter enclosure polytope (hexagon with dashed lines) is obtained by shifting
        the boundaries of a second kernel polytope (small black triangle) outside. The light
        grey areas mark states inside the enclosure polytope that are not separable and
        thus account for the deviation of the enclosure polytope from the true
        set of separable states.}
        \label{figinsideout}
    \end{figure}
\end{enumerate}
The difficult part of Proposition~\ref{shiftmethod} is to prove that
$G_\lambda$ is an entanglement witness, in particular the
verification of the separability condition \eqref{defentwitsep}. To
accomplish this we can efficiently use the previous corollaries and
lemmas, which will be demonstrated by the example of the next
section.

\section{Entanglement properties of a family of two-qutrit states}
\label{secexample}

An interesting set of states is the \emph{magic simplex} of
two-qudit states (dimension $d \times d$) \cite{baumgartner06,
baumgartner07, baumgartner08}. It is the set of all states that are
mixtures of Bell states $P_{nm}$,
\begin{equation}
    {\cal W} \,:=\, \left\{ \sum_{n,\,m=0}^{d-1} q_{nm} P_{nm} \ | \ q_{nm}
    \geq 0, \ \sum_{n,\,m} q_{nm}=1 \right\},
\end{equation}
where the $d^2$ operators (the \emph{Bell states})
\begin{equation} \label{bellstates}
    P_{nm} \,:=\, (U_{nm} \otimes \mathbbm{1}) | \phi^+_d \rangle
    \langle \phi^+_d | (U_{nm}^\dag \otimes \mathbbm{1})
\end{equation}
form an orthogonal basis of the $d \times d$ dimensional Hilbert
space and the vector state $| \phi^+_d \rangle$ denotes the
maximally entangled state
\begin{equation}
    \left| \phi^+_d \right\rangle = \frac{1}{\sqrt{d}} \,\sum_{j=0}^{d-1} \left| j
    \right\rangle \otimes \left| j \right\rangle.
\end{equation}
The unitary operatos $U_{nm}$ are the Weyl operators
\begin{equation} \label{weylop}
    U_{nm} = \sum_{k=0}^{d - 1} e^{\frac{2 \pi i}{d}\,kn} \,| k \rangle
    \langle (k+m) \,\textrm{mod}\,d| \,,
\end{equation}
which have been introduced in classical theories of discrete phase
space and appear in quantum information theory in the context of
quantum teleportation \cite{bennett93}. In the teleportation
protocol the Bell state basis \eqref{bellstates} is the higher
dimensional generalization of the two-qubit basis and the Weyl
operators $U_{nm}$ are the analogue of the Pauli operators, they
correspond to the operators Bob has to apply in order to obtain the
teleported state. The reduced density operators of states that are
elements of the magic simplex are maximally mixed, but not all
two-qudit states with maximally mixed reduced density operators are
elements of the magic simplex, apart from dimension $2 \times 2$,
where all locally maximally mixed states are included in the
tetrahedron of all Bell state mixtures \cite{baumgartner08}.
Furthermore the magic simplex has a high symmetry in the phase space
of the coefficients $nm$, for a detailed discussion see
Refs.~\cite{baumgartner06, baumgartner07}.

The Weyl operators \eqref{weylop} form an orthogonal operator basis,
\begin{equation} \label{orthonormwo}
\text{Tr} \,U_{nm}^{\dag} U_{lj} = d \,\delta_{nl} \,\delta_{mj}
\end{equation}
and hence can be used for Bloch decompositions. The Bell states
$P_{nm}$ \eqref{bellstates} can be expressed with Weyl operators as
(where the indices have to be taken mod~$d$)
\begin{equation} \label{projbloch}
    P_{nk} = \frac{1}{d^2} \sum_{m,l=0}^{d-1} e^{\frac{2 \pi
    i}{d}(kl-nm)} \,U_{lm} \otimes U_{-lm} = \frac{1}{d^2}
    \sum_{l,m=0}^{d-1} c_{lm} \,U_{lm} \otimes U_{-lm} \,.
\end{equation}
Obviously the Bloch vectors corresponding to the Bell states have a
diagonal but in general complex coefficient matrix $(c_{ij})$, where
$i$ counts the different combinations of $lm$ and $j$ those of $kn$,
and $c_{ii} := c_{lm}$. The singular values of the coefficient
matrix are $s_i = |c_{ii}| = |c_{lm}|$.

Note that a construction of the type \eqref{bellstates} can be done
with any unitary operators that form a matrix basis of the
Hilbert-Schmidt space, obtaining other bases of orthogonal maximally
entangled states.

A subset of the magic simplex of two-qutrit states (dimension $3
\times 3)$ that reveals interesting entanglement characteristics is
the three-parameter family \cite{bertlmann08, bertlmann08b}
\begin{equation} \label{famstates}
    \rho_{\alpha,\beta,\gamma} := \frac{1-\alpha -\beta -\gamma}{9}
    \mathbbm{1} + \alpha P_{00} + \frac{\beta}{2} \left( P_{10} +
    P_{20} \right) + \frac{\gamma}{3} \left( P_{01} +P_{11}+P_{21}
    \right)\,,
\end{equation}
where the parameters are constrained by the positivity requirement
$\rho_{\alpha,\beta,\gamma} \geq 0$,
\begin{alignat}{2} \label{famstatespos}
    \alpha &\leq \frac{7}{2} \beta +1 -\gamma \,, \quad
    & \alpha &\leq -\beta +1 -\gamma \,, \nonumber\\
    \alpha &\leq -\beta +1 +2\gamma \,, \quad
    & \alpha &\geq \frac{\beta}{8} - \frac{1}{8} + \frac{1}{8}\gamma \,.
\end{alignat}
The family of states \eqref{famstates} contains a one-parameter
family of states that have three entanglement properties; they can
be separable, PPT entangled and NPT entangled. We call them
Horodecki states $\rho_b$ \cite{horodecki99c},
\begin{equation} \label{horstates}
    \rho_b = \frac{2}{7} \left| \phi_+^3 \right\rangle \left\langle
    \phi_+^3 \right| \,+\, \frac{b}{7} \, \sigma_+ \,+\, \frac{5 - b}{7} \,
    \sigma_- \,, \qquad 0 \leq b \leq 5 \,,
\end{equation}
and, according to our parametrization,
\begin{equation} \label{horstatessimplex}
    \rho_b := \rho_{\alpha, \beta, \gamma} \qquad\mbox{with}\quad \alpha =
    \frac{6-b}{21}, \;\beta = -\frac{2b}{21}, \;\gamma = \frac{5-2b}{7}
    \,.
\end{equation}
Using the PPT criterion we find regions of PPT and NPT Horodecki
states: They are NPT for $0 \leq b < 1$, PPT for $1 \leq b \leq 4$
and again NPT for $4 < b \leq 5\,$. In Ref.~\cite{horodecki99c} it
is shown that the states are separable for $2 \leq b \leq 3$ and
bound entangled for $3 < b \leq 4$.

Now let us apply the PPT criterion \eqref{PPT} and the realignment
criterion \eqref{realign} to our three-parameter family
\eqref{famstates}. The PPT criterion provides the following
parameter constraints for PPT states $\rho_{\alpha,\beta,\gamma}$:
\begin{eqnarray} \label{famstatesPPT}
    \alpha &\leq& - \beta - \frac{1}{2} + \frac{1}{2}\gamma \,, \nonumber\\
    \alpha &\leq& \frac{1}{16} \left( -2 + 11\beta + 3 \sqrt{\Delta}\right)\,, \quad
    \alpha \geq \frac{1}{16} \left( -2 + 11\beta - 3 \sqrt{\Delta}\right) \,,
\end{eqnarray}
where $\Delta = 4 + 9 \beta^2 + 4\gamma - 7\gamma^2 - 6\beta (2 +
\gamma)$. Hence all states $\rho_{\alpha,\beta,\gamma}$ with
constraints \eqref{famstatesPPT} are either bound entangled our
separable, whereas the others are NPT entangled.

From the realignment criterion we obtain the constraints
\begin{align}
    \alpha \leq & \ \frac{1}{16} \left(6+11 \beta -\gamma
        - \Delta_1 \right) \label{realignconstraint} \\
    \alpha \leq & \ \frac{1}{16} \left(6+11 \beta -\gamma
        + \Delta_1 \right) \\
    \alpha \geq & \ \frac{1}{16} \left(-6+11 \beta -\gamma
        - \Delta_2 \right) \\
    \alpha \geq & \ \frac{1}{16} \left(-6+11 \beta -\gamma
        + \Delta_2 \right)
\end{align}
where
\begin{align}
    \Delta_1 := & \ \sqrt{4+36 \beta +81 \beta ^2-12 \gamma -54 \beta \gamma
        +33 \gamma ^2} \quad \mbox{and} \nonumber\\
    \Delta_2 := & \ \sqrt{4-36 \beta +81 \beta ^2+12
        \gamma -54 \beta  \gamma +33 \gamma ^2}.
\end{align}
Only constraint \eqref{realignconstraint} is violated by some PPT
states, which thus have to be bound entangled. The PPT entangled
states exposed by the realignment criterion are therefore
concentrated in the region confined by the constraints
\begin{equation} \label{famstatesbe}
\alpha \leq \frac{7}{2} \beta +1 -\gamma, \ \alpha \leq \frac{1}{16}
\left( -2 + 11\beta + 3 \sqrt{\Delta}\right), \ \alpha \geq \
\frac{1}{16} \left(6+11 \beta -\gamma - \Delta_1 \right) \,.
\end{equation}
The three-parameter family \eqref{famstates} also bears the
advantage that it can be nicely illustrated by the Euclidean geometry.
To do this, note that the orthogonality conditions of the
Hilbert-Schmidt space ${\cal A}$ have to be transferred correctly,
which is achieved by choosing a nonorthogonal and differently scaled
coordinate system of parameter axes $\alpha$, $\beta$, and $\gamma$.
They are chosen such that they each become orthogonal to one of the
boundary planes of the set of the three-parameter family of states,
given by the positivity constraints \eqref{famstatespos}. In order
to calculate quantities well known in an Euclidean space spanned by
an orthogonal equally scaled coordinate system, we have to transform
points of the non-orthogonal coordinates $(\alpha, \beta, \gamma)$
into points of orthogonal coordinates $(a, b, c)$ and vice versa by
\begin{equation} \label{orthcoord}
    a = \alpha - \frac{1}{8} \beta - \frac{1}{8} \gamma, \ b =
    \frac{\sqrt{3}}{8} \left( 3 \beta - \gamma \right), \ c =
    \frac{\sqrt{3}}{4} \gamma \,.
\end{equation}
In Fig.~\ref{figallcrit} the three-parameter family of states
$\rho_{\alpha,\beta,\gamma}$ \eqref{famstates} including NPT
entangled, PPT entangled (bound entangled) and further PPT states
are illustrated in the Euclidean geometry picture.
\begin{figure}
    \includegraphics[width=0.7\textwidth]{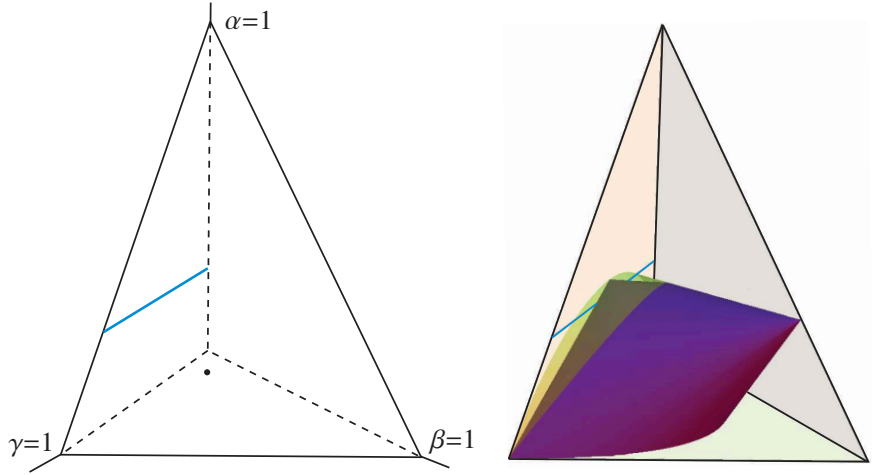}\\
    \caption{Illustration in of the family of states
    $\rho_{\alpha,\beta,\gamma}$ \eqref{famstates} in the Euclidean
    geometry. Left: All states $\rho_{\alpha,\beta,\gamma}$ lie
    within a pyramid due to the positivity constraints
    \eqref{famstatespos}. The dot represents the origin of the
    coordinate axes, which is the maximally mixed state $(1/9) \id$.
    The line (blue) on the left boundary plane represents the
    Horodecki states \eqref{horstates}.
    Right: Illustration of the PPT and realignment criteria. The
    cone with tip on the right vertex line of the pyramid
    contains the PPT states, which is intersected by a cone (tip on
    the left boundary plane of the pyramid) of
    states that satisfy the realignment criterion, and hence PPT
    entangled states can be revealed (translucent yellow region).
    All other states of the pyramid are NPT entangled.} \label{figallcrit}
\end{figure}
In Refs.~\cite{bertlmann08, bertlmann08a} we applied the outside-in
shift method to detect most of the bound entangled states
\eqref{famstatesbe}, where a version of Lemma~\ref{lemold} (for Weyl
operator decompositions) was used to show that for particular
parameter regions geometric operators correspond to geometric
entanglement witnesses. The geometric shifting operators $G_\lambda$
\eqref{shiftop} were constructed on lines between bound entangled
starting states $\rho = \rho_b^{\rm{BE}}$ on the Horodecki line
\eqref{horstates} and the maximally mixed state, $\tilde{\rho} =
(1/9) \id$.

Actually all PPT entangled states of Eq.~\eqref{famstatesbe},
Fig.~\ref{figallcrit}, can be detected using Lemma~\ref{lemold}. To
see this, we construct tangent planes onto the surface of the
function
\begin{equation} \label{realignfunction}
\alpha = \ \frac{1}{16} \left(6+11 \beta -\gamma - \Delta_1 \right)
\end{equation}
from the realignment criterion \eqref{realignconstraint}, where we
use orthogonal coordinates \eqref{orthcoord}. In this way we can
assign geometric operators to the tangential plane by choosing
points $\vec{a}$ inside the planes and points $\vec{b}$ outside the
planes such that $\vec{a}- \vec{b}$ is orthogonal to the planes.
Since the Euclidean geometry of our picture is isomorphic to the
Hilbert-Schmidt geometry, the points $\vec{a}$ and $\vec{b}$
correspond to states $\rho_a$ and $\rho_b$ and we can construct the
geometric operator accordingly,
\begin{equation} \label{geomopre}
G_{\rm{re}} = \rho_a - \rho_b - \langle \rho_a, \rho_a - \rho_b
\rangle \id_9 \,.
\end{equation}

This operators \eqref{geomopre} are linear combinations of the
three-parameter states $\rho_{\alpha, \beta, \gamma}$ which are
linear combinations of the Bell states $P_{nm}$ \eqref{bellstates}
and can be written as a Bloch decomposition using
Eq.~\eqref{projbloch}. First we need to define some expressions of
Weyl operator combinations,
\begin{align} \label{defu1u2}
    U_1 := \ & U_{01} \otimes U_{01} + U_{02} \otimes U_{02} + U_{11}
    \otimes U_{-11} + U_{12} \otimes U_{-12} + U_{21} \otimes U_{-21}
    + U_{22} \otimes U_{-22} \,, \nonumber\\
    U_2 := \ & U^I_2 + U^{II}_2  \qquad\mbox{with}\quad
    U^I_2 := U_{10} \otimes U_{-10} \,,\quad U^{II}_2 := U_{20} \otimes U_{-20}
    \,.
\end{align}
The geometric operators \eqref{geomopre} corresponding to tangent
planes in points $(\alpha_{\rm t}, \beta_{\rm t}, \gamma_{\rm t})$,
where $\alpha_{\rm t}$ is a function of $\beta_{\rm t}$ and
$\gamma_{\rm t}$, given by the realignment function
\eqref{realignfunction}, are
\begin{align} \label{geomopreweyl}
    G_{\rm{re}} = & \ a \,(2 \,\mathbbm{1} - U_1 + c \, U^I_2 \,+\, c^* U^{II}_2 )
        \,, \quad \mbox{with} \nonumber\\
    &a = \frac{1}{36} \left(-2-9 \beta_{\rm t}+3 \gamma_{\rm t}+3
        \Delta_c \right), \nonumber\\
    &c = \frac{9 \gamma_{\rm t}^2+(-2-9 \beta_{\rm t}+3 \gamma_{\rm t})
        \Delta_c + \sqrt{3} \gamma_{\rm t} \left(2+9 \beta_{\rm t}-3 \gamma_{\rm t}+3
        \Delta_c \right) i}{(2+9 \beta_{\rm t})^2-6 (2+9 \beta_{\rm t}) \gamma_{\rm t}
        +36 \gamma_{\rm t}^2} \,, \nonumber\\
    &\Delta_c := \sqrt{4+36 +81 \beta_{\rm t}^2-12 \gamma_{\rm t}-54
        \beta_{\rm t} \gamma_{\rm t}+33 \gamma_{\rm t}^2} \,.
\end{align}
The singular values of the correlation coefficient matrix are the
absolute values of the coefficients $-1$, $c$ and $c^*$ in
Eq.~\eqref{geomopreweyl}, which are all one,
\begin{equation}
    \{ s_i \} = \{ 1, 1, 1, 1, 1, 1, 1, 1 \} \,,
\end{equation}
and therefore, according to Lemma~\ref{lemold}, the geometric
operators $G_{\rm{re}}$ are entanglement witnesses that detect the
entanglement of all states ``above'' the corresponding planes, thus
also the bound entangled states in the region of
Eq.~\eqref{famstatesbe}.

We might ask ourselves if the PPT entanglement of
Eq.~\eqref{famstatesbe}, revealed by the realignment criterion and
also by GEWs, is all there is for the three-parameter states
\eqref{famstates}. Or, to put it differently, are all the
three-parameter states that satisfy
\emph{both} the PPT and the realignment criterion separable? We can
answer this question by using GEWs and the inside-out shift method.
The entanglement properties of the states on the boundary plane
\begin{equation} \label{boundplane}
    \alpha = \frac{7}{2} \beta +1 -\gamma
\end{equation}
of the positivity pyramid are already fixed. The realignment
function \eqref{realignfunction} and also the GEWs $G_{\rm re}$
\eqref{geomopreweyl} draw a triangle on this plane, whose vertices
are separable states. The tip of the triangle is a separable state
since it is PPT and $\gamma = 0$ (all PPT states of the
two-parameter subset $\gamma = 0$ are separable, shown in
Ref.~\cite{baumgartner06}), the other two, at $\gamma = 1$ and
$\gamma = -1$ are simple mixtures of Bell states $P_{nm}$ that are
also shown to be separable in Ref.~\cite{baumgartner06}. So the the
triangle is the convex hull of the three separable states and thus
has to be separable. For an illustration of the entanglement
properties on the boundary plane \eqref{boundplane} see
Fig.~\ref{figpolygonplane}.
\begin{figure}
    \includegraphics[width=0.7\textwidth]{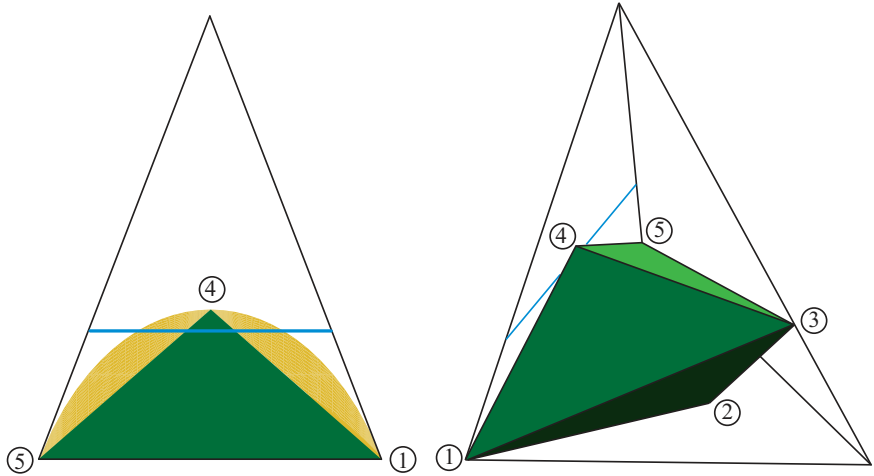}\\
    \caption{Left: The entanglement properties of the
    three-parameter family on the boundary plane \eqref{boundplane} where the Horodecki
    states are located. The triangular region (green) contains the
    separable states, the bound entangled states are located in the
    parabolic region (yellow), and the remaining states are NPT
    entangled. Right: The kernel polytope is a polygon (green) that includes
    states that are necessarily separable.} \label{figpolygonplane}
\end{figure}

But what about all the three-parameter states \eqref{famstates}?
First, we construct a kernel polytope of those states that are
necessarily separable. This can be done by identifying five separable
states that serve as vertices for the kernel polytope. Three arise
from the two-parameter subset $\gamma = 0$, where all PPT states are
separable, the remaining two vertices are the separable states with
$\gamma = 1$ and $\gamma = -1$ on the boundary plane
\eqref{boundplane}. The resulting kernel polytope is a polygon with
five vertices, see Fig.~\ref{figpolygonplane}. Alternatively, one can also use sufficient criteria for separability to construct a kernel polytope of separable states. In Ref.~\cite{chruscinski08a} a sufficient separability criterion is presented that is shown to be applicable for states of the magic simplex.

In Ref.~\cite{bertlmann08b} it remained open if this polygon
contains all separable states of the three-parameter family
\eqref{famstates}, which would imply much greater regions of bound
entanglement than detected before. Here we want to show that this is
not the case.

We can assign geometric operators to four boundary planes of the
kernel polygon, in the same way as we did for the planes on the
realignment surface, see Eq.~\eqref{geomopre}, where we use the
geometric isomorphism again. We call the four geometric operators
$G^u_{\pm}, G^d_{\pm}$, which correspond to the following planes
given by three vertex points (see Fig.~\ref{figpolygonplane}):
$G^u_+$ to $1,3,4$, $G^u_-$ to $3,4,5$, $G^d_+$ to $1,2,3$, and
$G^d_-$ to $2,3,5$. The plus and minus sign indicates the side with
positive or negative values of the parameter $\gamma$. The operators
are
\begin{align} \label{polygonop}
    G^u_\pm = & \ a \,(2 \,\mathbbm{1} - U_1 + c \, U^I_2 \,+\, c^* U^{II}_2 )
        \,, \quad \mbox{with } a = \frac{1}{63}, \ c = -1 \pm
        \sqrt{3}i \nonumber\\
    G^d_\pm = & \ a \,(2 \,\mathbbm{1} + U_1 + c \, U^I_2 \,+\, c^* U^{II}_2 )
        \,, \quad \mbox{with } a = \frac{1}{63}, \ c = -1 \pm
        \sqrt{3}i
\end{align}
The boundary planes can be easily shifted along parameterized lines
through their normal vectors, and so can the assigned geometric
operators \eqref{polygonop}. Note that we have a simplified picture
of locally maximally mixed states, see Remark~\ref{remmaxmixed}. The
operators $G^u_{\pm}, G^d_{\pm}$ themselves are not entanglement
witnesses, since the condition \eqref{bvcondmaxmix} can be
numerically shown to be violated (see Corollary~\ref{corbvew}). The
singular values are again the absolute values of the correlation
coefficients, $\{s_i\} = \{1,1,1,1,1,1,2,2\}$, hence
Lemma~\ref{lemold} does not give an answer. It is difficult to show
a violation analytically because of the complex Bloch vector
geometry of qutrits, see Remarks~\ref{remcorbvew} and
\ref{remlemold}. In order to check the condition \eqref{bvcondmaxmix} we minimize the left-hand term numerically by varying the
possible Bloch vectors $\vec{n}^*, \vec{m}^*$, restricted by the
condition $\rho \geq 0$ with $\rho$ of Eq.~\eqref{blochonepart}.
Shifting the operators outside, we find a minimum $S = -1$ of
condition \eqref{bvcondmaxmix} when the planes become tangent to the
shape enclosed by the PPT and realignment criterion, achieving new
vertices at the touch points. Employing the inside-out shift method,
see Sec.~\ref{secgew} and Fig.~\ref{figinsideout}, we construct a
new polygon with the new vertices, and assign new geometric
operators corresponding to the new boundary planes. Shifting the new
operators outside, we again find the minimum $S = -1$ at planes
tangent to the PPT and realignment shape. Therefore there is a very
strong implication that the PPT and realignment shape, seen as the
two-cone shape in Fig.~\ref{figallcrit}, is the shape of the
separable states. Fig.~\eqref{figallcrit} thus is a picture of all
entanglement properties of the three-parameter family.

\section{Summary and conclusion}

We use the concept of Bloch decompositions and entanglement
witnesses to detect the entanglement properties of arbitrary
dimensional bipartite quantum states. In particular we show how to
reformulate the conditions of the entanglement witness criterion by
using Bloch decompositions (Corollary~\ref{corbvew}) and formulated
a sufficient condition for an operator to be an entanglement witness
(Lemma~\ref{lemold}).

We give the definition of a geometric operator and a geometric
entanglement witness and explain two methods of ``shifting'' it
(Proposition~\ref{shiftmethod}): One for the detection of bound
entangled states, the \emph{outside-in shift}, and one for the
detection of separable states and for the construction of the shape
of the set of separable states, the \emph{inside-out shift}.

Finally we apply the previous results on a family of three-parameter
two-qutrit states that are part of a simplex in the state space of
two qutrits, the \emph{magic simplex}. We show how to detect bound
entangled states and construct the shape of separable states for
this family. The results can be conveniently illustrated by the
Euclidean geometry.

Our approach to entanglement detection is guided by the
geometrically intuitive way of using entanglement witnesses. The
construction of geometric entanglement witnesses directly uses the
fact that entanglement witnesses correspond to hyperplanes in the
Hilbert-Schmidt geometry. In this way it becomes easier to apply
geometric operations like the shifting of planes. Using Bloch vector
decompositions of operators and states we can furthermore simplify
the conditions that have to be satisfied such that a geometric
operator is a geometric entanglement witnesses. The construction of
geometric entanglement witnesses does not rely on special properties
of the states, i.e. it can be done for NPT or PPT entangled states
likewise.

The presented example is relevant in many aspects. First of all the
states of the magic simplex are a higher dimensional analogy of
Bell-state mixtures of the two-qubit case that are relevant for
quantum communication tasks, as explained in Sec.~\ref{secexample}.
Furthermore it is interesting and surprising that this particular
three-parameter family includes the Horodecki states that were among
the first examples of bound entangled states. Thus the
three-parameter family can be viewed as a more-parameter extension
of the Horodecki states that includes even more bound entangled
states. Finally the three-parameter states allow a nice Euclidean
illustration that makes the regions of entangled, bound entangled
and separable states visible.

Throughout the paper we restrict ourselves to bipartite states. Of
course a multipartite extension is trivially possible if we only
want to distinguish between states that contain entangled states in
any of its particles and states that are fully separable into all
particles. The definition of separable states just has to be
extended with additional tensor products respectively. In the case
of multipartite states one can distinguish between the
distillability of states into entangled states of a fixed number of
particles \cite{hiesmayr08a}. Entangled multipartite states can
themselves be classified in different ways, for example with respect
to the number of particles that are entangled. For details see,
e.g., Refs.~\cite{duer00a, acin01, hiesmayr08}.

\begin{acknowledgments}

The author would like to thank Reinhold A. Bertlmann, Beatrix C.
Hiesmayr, and Marcus Huber for helpful discussions. This research
has been financially supported by FWF project CoQuS no. W1210-N16 of
the Austrian Science Foundation.

\end{acknowledgments}

\bibliography{references}

\end{document}